\algrenewcommand\textproc{\textsc}
\title{Efficiently Computing the Cyclic Output-to-Output Gain}
\author[aff1]{Daniel Arnström}
\author[aff1]{André M.H. Teixeira}
\address[aff1]{Department of Information Technology, Uppsala University, Sweden}
\begin{document}
\renewcommand{\baselinestretch}{1.0}

\definecolor{set19c1}{HTML}{E41A1C}
\definecolor{set19c2}{HTML}{377EB8}
\definecolor{set19c3}{HTML}{4DAF4A}
\definecolor{set19c4}{HTML}{984EA3}
\definecolor{set19c5}{HTML}{FF7F00}
\definecolor{set19c6}{HTML}{FFFF33}
\definecolor{set19c7}{HTML}{A65628}
\definecolor{set19c8}{HTML}{F781BF}
\definecolor{set19c9}{HTML}{999999}

\pgfplotstableread{data/hamiltonian_new.dat}{\hamiltonian}
\pgfplotstableread{data/mosek_new.dat}{\mosek}
\pgfplotstableread{data/nwc_hamiltonian.dat}{\ncshamiltonian}
\pgfplotstableread{data/nwc_mosek.dat}{\ncsmosek}
\pgfplotstableread{data/nwc_clarabel.dat}{\ncsclarabel}
\pgfplotstableread{data/reg_ex7.dat}{\exreg}
\pgfplotstableread{data/reg_unbounded.dat}{\exregunb}

\begin{abstract}
    The cyclic output-to-output gain is a security metric for control systems. Commonly, it is computed by solving a semi-definite program, which scales badly and inhibits its use for large-scale systems. We propose a method for computing the cyclic output-to-output gain using Hamiltonian matrices, similar to existing methods for the $H_\infty$-norm. In contrast to existing methods for the $H_{\infty}$-norm, the proposed method considers generalized singular values rather than regular singular values. Moreover, to ensure that the Hamiltonian matrices exist, we introduce a regularized version of the cyclic output-to-output gain. Through numerical experiments, we show that the proposed method is more efficient, scalable, and reliable than semi-definite programming approaches.
\end{abstract}

\begin{keyword}
Output-to-output gain \sep Hamiltonian matrix \sep Security metrics
\end{keyword}

\maketitle
\thispagestyle{empty}
\pagestyle{empty}
\newtheorem{proposition}{Proposition}
\newtheorem{lemma}{Lemma}
\newtheorem{corollary}{Corollary}
\newtheorem{remark}{Remark}
\newtheorem{theorem}{Theorem}
\newtheorem{definition}{Definition}
\newtheorem{assumption}{Assumption}
\newtheorem{example}{Example}
\newtheorem{problem}{Problem}
\newtheorem{property}{Property}

\section{Introduction}
As cyber-physical systems become more common, cybersecurity for control systems is becoming more important \cite[\S 4.C]{roadmap}. When considering possible cyber attacks, two important aspects are (i) the potential damage of an attack and (ii) if the attack can be detected \cite{teixeira2015secure}. 
A metric that combines the potential damage of an attack and its stealthiness is the  
\emph{output-to-output gain} (OOG) \cite{teixeira2021security}. 
When the OOG is used in practice, it is often computed by solving a semi-definite program (SDP) \cite{teixeira2021security,anand2023risk,nguyen2024scalable}. While manageable for small- to medium-scale control systems, this approach becomes restrictive for large-scale systems. Recent works have attempted to make the OOG more scalable \cite{nguyen2024scalable,anand2024scalable}, but are restricted in their applicability. In \cite{nguyen2024scalable}, the OOG for \textit{positive} systems are computed by restricting SDP decision variables to diagonal matrices \cite{nguyen2024scalable}, which improves scalability at the cost of an inexact metric. In \cite{anand2024scalable} the OOG is made more scalable by solving a linear program instead of an SDP; however, it also only applies to positive systems. For large-scale systems that are not positive, efficiently computing the OOG is an open problem. 

The OOG is a generalization of the classical $H_{\infty}$-norm, which plays a central role in robust control \cite{francis1987course}. 
For the $H_\infty$-norm, there exists efficient ways of computing it that are based on computing eigenvalues of Hamiltonian matrices \cite{boyd1989bisection,bruinsma1990fast}, rather than solving semidefinite programs. The same techniques have, however, not yet been applied to compute the OOG. What restsricts the methods in \cite{boyd1989bisection,bruinsma1990fast} to be directly applied to compute the OOG is twofold. First, a particular Algebraic Riccati Equation (ARE) associated with the OOG might not exist \cite{willems1971least}. Secondly, the OOG is more general than the $H_{\infty}$-norm, and hence the particular Hamiltonian matrix in \cite{boyd1989bisection,bruinsma1990fast} does not generally apply for the OOG.

The main contribution of this paper is a method for computing a regularized version of the output-to-output gain by computing eigenvalues of Hamiltonian matrices. The regularization ensures that a corresponding ARE to the regularized output-to-output gain always exists, which allows the Hamiltonian matrices in \cite{boyd1989bisection,bruinsma1990fast} to be extended to a more general setting.  
We show that the resulting method is several orders of magnitude faster than solving a corresponding semidefinite program. Numerical experiments for networked control systems also show that the proposed method is more scalable and reliable than the scalable method proposed in \cite{nguyen2024scalable}. In addition, the proposed method can, in contrast to \cite{nguyen2024scalable,anand2024scalable}, be applied to systems that are not positive.

In summary, the contributions of the paper are: 
\begin{enumerate}
    \item We introduce a \emph{regularized} version of the \emph{cyclic} OOG (Definition~\ref{def:roog}).
    \item We show how the regularized cyclic OOG is related to a Hamiltonian matrix (Theorem~\ref{th:hamiltonian} and Corollary~\ref{cor:ham-rcoog}).  
    \item We propose a method that efficiently computes the regularized cyclic OOG by computing eigenvalues to a Hamiltonian matrix (Algorithm \ref{alg:main}).
\end{enumerate}

The rest of the paper is structured as follows: in Section~\ref{sec:oog}, we introduce different version of the output-to-output gain, leading up to the regularized cyclic output-to-output gain. We then propose a method to efficiently compute the regularized cyclic output-to-output gain in Section~\ref{sec:comp}. The favourable properties of this method is then illustrated in Section~\ref{sec:result} through numerical experiments on general system and positive networked control systems. We then conclude that paper in Section \ref{sec:conc} with some possible avenues for future work.

\section{Output-to-output gains}
\label{sec:oog}
The main problem considered in this paper is how to compute so-called \emph{output-to-output gains} efficiently. In this section, we define different kinds of output-to-output gains and introduce corresponding semi-definite programs (SDPs) that are commonly used to compute these gains. This sets the stage for the main contribution in Section~\ref{sec:comp}, where we propose a novel way to compute output-to-output gains by computing eigenvalues of Hamiltonian matrices.
\subsection{Preliminaries}
    We consider linear dynamical systems with state-space form 
\begin{equation}
    \dot{x} = A x + B u,
\end{equation}
where $x$ is the \emph{state} of the system and $u$ is a \emph{control} action. We compactly denote this system as $\Sigma \triangleq (A,B)$. 

In this paper, we consider two different outputs from the system:
\begin{equation}
    \begin{aligned}
        y_p &= C_p x + D_p u, \\
        y_r &= C_r x + D_r u. \\ 
    \end{aligned}
\end{equation}
Accordingly, we define the systems $\Sigma_p \triangleq (A,B,C_p, D_p)$ and $\Sigma_r \triangleq (A,B,C_r, D_r).$ Moreover, the transfer function from $u$ to $y_p$ is
\begin{equation}
    G_p(s) \triangleq C_p(sI-A)^{-1}B + D_p,
\end{equation}
and similarly the transfer function from $u$ to $y_r$ is
\begin{equation}
    G_r(s) \triangleq C_r(sI-A)^{-1}B + D_r.
\end{equation}
\begin{remark}
    Typically, the signal $y_p$ is a measure of performance of the control system, where a smaller value of $y_p$ is desirable.
    The signal $y_r$ is typically the residual from an anomaly detector (a larger value of $y_r$ signify an anomaly.)
\end{remark}

The $\mathscr{L}_{2}$-norm, denoted $\|\cdot\|_{\mathscr{L}_2}$, of a signal $y$ is defined as 
\begin{equation}
    \|y\|_{\mathscr{L}_2} \triangleq \sqrt{\int_{0}^{\infty} \|y(t)\|^2_2 dt}.
\end{equation}

We consider, in particular, input signals $u$ that are in the \textit{extended} $\mathscr{L}_2$ space, denoted $\mathscr{L}_{2e}$, which is the set of all signals which has finite energy for all possible truncations: 
\begin{equation*}
    \mathscr{L}_{2e} \triangleq \left\{u: R \to R^{n_a} : \int_{0}^T \|u(t)\|^2_2 dt < \infty, \forall T < \infty \right\}.
\end{equation*}

So-called \emph{generalized singular values} will play an important role later on.
These are solutions to generalized eigenvalue problems, with the matrices being Gram matrices \cite{golub2013matrix}. Specifically, they are defined as follows. 

\begin{definition}[Generalized singular values]
The generalized singular values of the matrices $M$ and $N$, denoted ${\sigma}(M,N)$ is defined as 
\begin{equation*}
    {\sigma}(M,N) = \{\sigma > 0 : M^H M v = \sigma^2 N^H N v \text{ for } v\neq 0\},
\end{equation*}
where superscript $H$ denotes the conjugate transpose.
\end{definition}

Of special importance is the maximum singular value, which we denote $\bar{\sigma}$; that is, $\bar{\sigma}(M,N) \triangleq \max \sigma(M,N)$.

\subsection{The output-to-output gain}
\begin{definition}[The output-to-output gain]
    For a system $\Sigma$ with the outputs $y_p$ and $y_r$, the \textit{output-to-output gain} (OOG) from $y_r$ to $y_p$, denoted $\|\Sigma\|_{y_p \leftarrow y_r}$, is defined as  
    \begin{equation}
        \begin{aligned}
            \|\Sigma\|_{y_p \leftarrow y_r} \triangleq\sup_{u\in \mathscr{L}_{2e}}\: &\|y_p\|^2_{\mathscr{L}_{2}}  \\
            \text{s.t. }\:\:& \|y_r\|^2_{\mathscr{L}_2} \leq 1. \\
                                                                 & x(0) = 0.
        \end{aligned}
    \end{equation}
\end{definition}

An intuition behind the OOG is that it gives a measure of how large the output $y_p$ can become given that we restrict the size of another output $y_r$. In the context of security, a larger value of $y_p$ signify that the damage to the system is large, while a larger $y_r$ makes an attack easier to detect. An interpretation of the OOG in a security context is, hence, as a measure of the maximum possible damage for undetected attacks. 

\begin{remark}[$H_{\infty}$ as a special case]
    \label{rem:Hinf}
The well-known $H_{\infty}$-norm of a system is a special case of the OOG when $y_r = u$ (i.e., when $C_r = 0$ and $D_r = I$.) 
\end{remark}

The OOG can be characterized through dissipative systems theory \cite{trentelman1991dissipation} by using the quadratic supply rate 
\begin{equation}
    \label{eq:supply-rate}
    \begin{aligned}
        w_{\gamma}(x,u) &\triangleq \|y_p\|_2^2 - \gamma \|y_r\|_2^2  \\
        &= \|C_p x + D_p u\|_2^2 - \gamma \|C_r x + D_r u\|_2^2  \\
                                         &= 
{
\begin{bmatrix}
 x \\ u  
\end{bmatrix}^T
\begin{bmatrix}
Q_\gamma & S_{\gamma} \\
S_\gamma^T & R_\gamma
\end{bmatrix}
\begin{bmatrix}
 x \\ u  
\end{bmatrix},
}
    \end{aligned}
\end{equation}
where 
      $Q_{\gamma} \triangleq C_p^T C_p - \gamma C_r^T C_r, S_{\gamma} \triangleq C_p^T  D_p - \gamma C_r^T D_r$ and  
    $R_{\gamma} \triangleq D_p^TD_p - \gamma D_r^T D_r^T.$

    Using this supply rate, the OOG can be computed by solving the following semi-definite program \cite{teixeira2021security}
\begin{equation}
    \label{eq:oog-sdp}
    \begin{aligned}
        \|\Sigma\|_{y_p \leftarrow y_r}& = \min_{P \succeq 0, \gamma > 0} \gamma \\
                                          &\text{subject to }
                                          {\footnotesize
                                              \begin{bmatrix}
                                                  A^T P + P A + Q_\gamma  & P B + S_\gamma \\
                                                  B^T P + S_\gamma^T &  R_\gamma 
                                              \end{bmatrix} 
                                          }
                                          \preceq 0.
    \end{aligned}
\end{equation}

The OOG is related to the transfer functions $G_p$ and $G_r$ through the following inequality
\begin{equation}
    \label{eq:oog-freq}
    \|\Sigma\|_{y_p \leftarrow y_r} \geq \max_{s: \text{Re}(s) \geq 0 }\bar{\sigma}(G_p(s),G_r(s)),
\end{equation}
Since the optimization is carried out over the entire right-hand side of the complex plane, it is not simple to use this inequality in practice. It is, however, a precursor to a more useful frequency relationship for the so-called \emph{cyclic} OOG, which is introduced next.  
\begin{remark}
    The inequality in \eqref{eq:oog-freq} comes from a frequency condition in \cite{willems1971least} that is necessary but not sufficient \cite{willems1974freq}. There are, however, special cases when this frequency condition is also sufficient, which makes \eqref{eq:oog-freq} an equality \cite{moylan1975freq,molinari1975freq}. Even if \eqref{eq:oog-freq} holds with equality, however, the optimization needs to be carried out over the entire right-hand side of the complex plane. In contrast, the cyclic OOG that is introduced next requires a search over only the imaginary axis.
\end{remark}

\subsection{The cyclic output-to-output gain}
An alternative version of the OOG with computational benefits, the \emph{cyclic} OOG (COOG), can be derived by restricting the set of signals such that $\lim_{t \to \infty} x(t) =0$.
\begin{definition}[The cyclic output-to-output gain]
    For a system $\Sigma$ with the outputs $y_p$ and $y_r$, the \textit{cyclic output-to-output gain} (COOG) from $y_r$ to $y_p$, denoted $\|\Sigma\|_{y_p \hookleftarrow y_r}$, is defined as  
\begin{equation}
    \begin{aligned}
        \|\Sigma\|_{y_p \hookleftarrow y_r} \triangleq\sup_{u \in \mathscr{L}_{2e}}\: &\|y_p\|^2_{\mathscr{L}_2}  \\
        \text{subject to }& \|y_r\|^2_{\mathscr{L}_2} \leq 1, \\ 
                    & \:x(0) = 0,\:\: \lim_{t \to \infty} x(t) = 0. 
    \end{aligned}
\end{equation}
\end{definition}

Similar to how dissipativity was used for the OOG to derive the SDP \eqref{eq:oog-sdp}, one can use \textit{cyclodissipativity} \cite{moylan2014dissipative} to derive a SDP that computes the COOG:
\begin{equation}
    \label{eq:coog-sdp}
    \begin{aligned}
        \|\Sigma\|&_{y_p \hookleftarrow y_r} = \min_{P , \gamma > 0} \gamma \\
                                          &\text{subject to }
                                          {\footnotesize
                                              \begin{bmatrix}
                                                  A^T P + P A + Q_\gamma  & P B + S_\gamma \\
                                                  B^T P + S_\gamma^T &  R_\gamma 
                                              \end{bmatrix} 
                                          }
                                          \preceq 0,
    \end{aligned}
\end{equation}
which is identical to \eqref{eq:oog-sdp}, except that the constraint $P \succeq 0$ has been dropped.

The COOG has the frequency characterization
\begin{equation}
    \label{eq:coog-freq}
    \|\Sigma\|_{y_p \hookleftarrow y_r} = \max_{\omega} \bar{\sigma}(G_p(i \omega), G_r(i \omega)),
\end{equation}
where, in contrast to \eqref{eq:oog-freq}, the maximization is carried out over only the imaginary axis. Moreover, instead of the inequality in \eqref{eq:oog-freq}, the frequency condition in \eqref{eq:coog-freq} holds with equality. Hence, \eqref{eq:coog-freq} can be used to compute the COOG, which we exploit in the proposed method. 

As is shown in \cite{willems1971least}, the solutions of \eqref{eq:coog-sdp} are related to the solution $P$ of the ARE
\begin{equation}
    \label{eq:ARE}
    A^T P + P A +(P B + S_{\gamma}) R^{-1}_{\gamma} (B^T P + S^T_{\gamma}) + Q_{\gamma} = 0.
\end{equation}
This ARE exists iff $R_\gamma$ is nonsingular, and is used in \cite{boyd1989bisection,bruinsma1990fast} to efficiently compute the $H_{\infty}$-norm by relating the ARE to a Hamiltonian matrix \cite{willems1971least,laub1979schur}.
Such Hamiltonian-based methods are, however, not directly applicable to computing the COOG since the COOG might, in contrast to the $H_{\infty}$-norm, yield an $R_{\gamma}$ that is singular.  For example, the important special case of $D_p=0$ and $D_r = 0$ yields $R_\gamma = 0$, which is trivially singular. In this case, there is no corresponding ARE and, consequently, no corresponding Hamiltonian matrix. To be able to leverage efficient Hamiltonian-based methods for computing the COOG, we therefore introduce \emph{regularized} versions of the COOG to ensure that a corresponding ARE always exists. 

\subsection{The regularized cyclic output-to-output gain}
To ensure that the COOG always have a corresponding ARE of the form \eqref{eq:ARE}, we introduce a regularized version of it:
\begin{definition}[The regularized cyclic output-to-output gain]
    \label{def:roog}
    For a system $\Sigma$ with the outputs $y_p$ and $y_r$, the \textit{cyclic regularized output-to-output gain} (RCOOG) from $y_r$ to $y_p$ with regularization $\epsilon >0$, denoted $\|\Sigma\|_{y_p \overset{\epsilon}{\hookleftarrow} y_r}$, is defined as  
\begin{equation}
    \label{eq:roog}
    \begin{aligned}
        \|\Sigma\|_{y_p \overset{\epsilon}{\hookleftarrow} y_r} \triangleq\sup_{u\in \mathscr{L}_{2e}}\: &\|y_p\|^2_{\mathscr{L}_2}  \\
        \text{subject to } & \|y_r\|^2_{\mathscr{L}_2} + \epsilon \|a\|^2_{\mathscr{L}_2} \leq 1. \\
                           &x(0) = 0, \quad \lim_{t \to \infty} x(t) = 0.
    \end{aligned}
\end{equation}
\end{definition}
Specifically, the COOG is recovered from the RCOOG when $\epsilon \to 0$. For more insight/motivation behind this regularization, see  
Remark 26 in \cite{willems1971least}. 

The RCOOG can be seen as the COOG of an augmented system with output signal $\tilde{y}_r= \left(\begin{smallmatrix}
        y_r \\ \sqrt{\epsilon} u 
\end{smallmatrix}\right)$, which gives the equality $\|\Sigma\|_{y_p \overset{\epsilon}{\hookleftarrow} y_r} = \|\Sigma\|_{y_p  \hookleftarrow \tilde{y}_r}.$ This perspective gives a straightforward way to extend previous results. Semi-definite programs for computing the RCOOG are, for example, obtained by replacing $R_{\gamma}$ with $R_{\gamma} - \gamma \epsilon I$, which highlights how the regularization allows us to modify $R_{\gamma}$ such that it becomes non-singular and, in turn, such that the ARE in \eqref{eq:ARE} exists.
Explicitly, the RCOOG can be computed through the SDP
\begin{equation}
    \label{eq:rcoog-sdp}
    \begin{aligned}
        \|\Sigma\|&_{y_p \overset{\epsilon}{\hookleftarrow} y_r} = \min_{P , \gamma > 0} \gamma \\
                                          &\text{subject to }
                                          {\footnotesize
                                              \begin{bmatrix}
                                                  A^T P + P A + Q_\gamma  & P B + S_\gamma \\
                                                  B^T P + S_\gamma^T &  R_\gamma -\gamma \epsilon I 
                                              \end{bmatrix} 
                                          }
                                          \preceq 0.
    \end{aligned}
\end{equation}
Moreover, the frequency relationship in \eqref{eq:coog-freq} is extended to the RCOOG by replacing $\bar{\sigma}(G_p,G_r)$ with $\bar{\sigma}(G_p,\left(\begin{smallmatrix} G_r \\ \sqrt{\epsilon} I \end{smallmatrix}\right))$, resulting in the frequency characterization
\begin{equation}
    \label{eq:rcoog-freq}
    \|\Sigma\|_{y_p \overset{\epsilon}{\hookleftarrow} y_r} = \max_{\omega} \bar{\sigma}(G_p(i \omega), \left(\begin{smallmatrix} G_r(i \omega)\\ \sqrt{\epsilon} I \end{smallmatrix}\right))
\end{equation}
for the RCOOG. We will use this characterization in the proposed method in the upcoming section.

\subsection{RCOOG vs COOG}
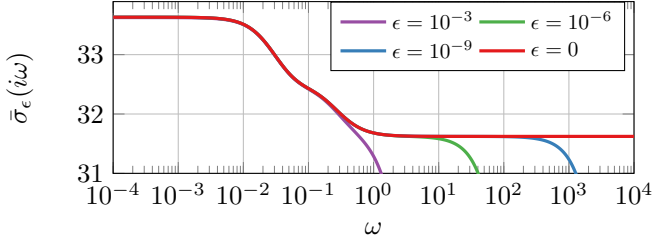
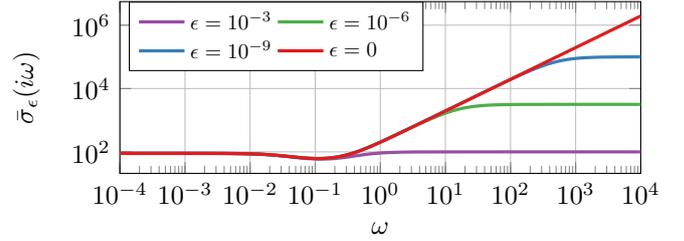
\begin{figure*}
    \centering
    \begin{subfigure}[b]{0.45\textwidth}
        \begin{tikzpicture}[scale=1]
    \begin{axis}[
        ymin = 31, 
        xmin = 0.0001, xmax=10000,
        scale=1,
        xmode=log,
        xlabel={$\omega$},
    ylabel={$\bar{\sigma}_{\epsilon}(i\omega)$},
        legend style={at ={(0.7,1)},anchor=north}, ymajorgrids,yminorgrids,xmajorgrids,
        x post scale=1,
        y post scale=0.4,
        legend cell align={left},legend columns=2,
        legend style={nodes={scale=0.8, transform shape}},
        ]
        \addplot [set19c4,very thick] table [x={w}, y={em3}] {\exreg}; 
        \addplot [set19c3,very thick] table [x={w}, y={em6}] {\exreg}; 
        \addplot [set19c2,very thick] table [x={w}, y={em9}] {\exreg}; 
        \addplot [set19c1,very thick] table [x={w}, y={e0}] {\exreg}; 
        \legend{$\epsilon=10^{-3}$,$\epsilon=10^{-6}$,$\epsilon=10^{-9}$,$\epsilon=0$};
    \end{axis}
\end{tikzpicture}
        \caption{Physical attack to the first tank of the tank system.}
        \label{fig:a}
    \end{subfigure}
    \qquad
    \begin{subfigure}[b]{0.45\textwidth}
        \begin{tikzpicture}[scale=1]
    \begin{axis}[
        xmin = 0.0001, xmax=10000,
        scale=1,
        xmode=log,
        ymode=log,
        xlabel={$\omega$},
        ylabel={$\bar{\sigma}_{\epsilon}(i\omega)$},
        legend style={at ={(0.3,1)},anchor=north}, ymajorgrids,yminorgrids,xmajorgrids,
        x post scale=1,
        y post scale=0.4,
        legend cell align={left},legend columns=2,
        legend style={nodes={scale=0.8, transform shape}},
        ]
        \addplot [set19c4,very thick] table [x={w}, y={em3}] {\exregunb}; 
        \addplot [set19c3,very thick] table [x={w}, y={em6}] {\exregunb}; 
        \addplot [set19c2,very thick] table [x={w}, y={em9}] {\exregunb}; 
        \addplot [set19c1,very thick] table [x={w}, y={e0}] {\exregunb}; 
        \legend{$\epsilon=10^{-3}$,$\epsilon=10^{-6}$,$\epsilon=10^{-9}$,$\epsilon=0$};

    \end{axis}
\end{tikzpicture}
        \caption{Actuator attack to the second tank of the tank system.}
        \label{fig:b}
    \end{subfigure}
    \caption{The maximum generalized singular values $\bar{\sigma}_\epsilon(i \omega) \triangleq \bar{\sigma}(G_p(i \omega), \left(\begin{smallmatrix} G_r(i \omega)\\ \sqrt{\epsilon} I \end{smallmatrix}\right))$ for the quadruple tank system for varying regularization $\epsilon$ and frequency $\omega$. Two different types of attacks from \cite{teixeira2021security} are considered. The main takeaway is that the regularization results in dampening of $\bar{\sigma}$ for higher frequencies.} 
    \label{fig:reg-example}
\end{figure*}
To show how the RCOOG qualitatively differs from the COOG, we consider a quadruple tank system under some attack scenarios considered in \cite{teixeira2021security}; we use the system matrices and controllers reported therein. We consider two attack scenarios: (a) a physical attack on one of the tanks; (b) an actuator attack on one of the tanks. For these systems, the resulting generalized singular values $\bar{\sigma}_\epsilon(i \omega) \triangleq \bar{\sigma}(G_p(i \omega), \left(\begin{smallmatrix} G_r(i \omega)\\ \sqrt{\epsilon} I \end{smallmatrix}\right))$ for different regularizations $\epsilon$ and different frequencies $\omega$ are shown in Figure~\ref{fig:reg-example}. In both cases, the main effect of the regularization is clear: $\bar{\sigma}_{\epsilon}( i\omega)$ is dampened for higher frequencies (the larger $\epsilon$ is, more lower frequencies are dampened.) Recall from \eqref{eq:rcoog-freq} that the RCOOG is the maximum of the generalized singular values taken over the frequencies. Moreover, $\epsilon = 0$ corresponds to the COOG. For attack scenario (a), the RCOOG is equal to COOG for the considered regularizations, since the maximum is obtained at lower frequencies (which is unaffected by the regularization.) This is a common case in practice, where the attack signals that can be applied contain lower frequencies. In attack scenario (b), the COOG ($\epsilon = 0$) is unbounded (as discussed in detail \cite{teixeira2021security}). In contrast, any choice of $\epsilon > 0$ makes the RCOOG bounded. From a practical perspective, this means that using the RCOOG instead of COOG can avoid singularities and unboundedness, which is advantageous when the metric is, for example, used for synthesis.

\section{Efficient Computation of the RCOOG}
\label{sec:comp}
For computing the RCOOG, we will exploit its frequency characterization in \eqref{eq:rcoog-freq}. A naive way of doing so would be to directly search over frequencies $\omega$, compute the corresponding $\bar{\sigma}$ for each frequency, and then pick the maximum value. We present a more efficient method that uses a relationship between $\bar{\sigma}$ and eigenvalues of a Hamiltonian matrix, which is the main contribution of this paper. Specifically, the method extends the method in \cite{bruinsma1990fast} that computes the $H_\infty$-norm, by considering generalized singular values rather than regular singular values. 

First, we derive the relationship between $\bar{\sigma}$ and the eigenvalues of a Hamiltonian matrix in Section~\ref{ssec:hamiltonian}, which we then use in Section~\ref{ssec:comp-alg} to propose Algorithm~\ref{alg:main} that efficiently computes the RCOOG. 

\subsection{Computing $\bar{\sigma}$ using a Hamiltonian matrix}
\label{ssec:hamiltonian}
As mentioned, a direct search over frequencies in \eqref{eq:coog-freq} is not an efficient way of computing the RCOOG. Instead, we will use an  ``inverse search'' over frequencies. That is, we will find all frequencies $\omega$ such that   
$\bar{\sigma}(G_p(i\omega),\left(\begin{smallmatrix}
    G_r(i\omega) \\ \sqrt{\epsilon} I 
\end{smallmatrix}\right)) = \gamma$ for any constant $\gamma$. The following theorem enables such an ``inverse search'', and is the main theoretical contribution of the paper. 
\begin{theorem}
    \label{th:hamiltonian}
    Assume that $A$ does not have any eigenvalues on the imaginary axis.
    Moreover, let   $ \mathcal{D} \triangleq (\gamma D_r^T D_r+\gamma \epsilon I - D_p^T D_p)$ and let $K = \mathcal{D}^{-1}(D_p^T C_p-\gamma D_r^T C_r)$. Then for all $\omega \in \mathbb{R}$,
    the matrices 
$(G_p(i\omega),\left(\begin{smallmatrix}
    G_r(i\omega) \\ \sqrt{\epsilon} I 
\end{smallmatrix}\right))$
    have a generalized singular value $\gamma$ iff the Hamiltonian matrix
    \begin{equation*}
        M_{\gamma} \triangleq 
        \footnotesize
        \begin{bmatrix}
            A+BK & - B \mathcal{D}^{-1} B^T \\
            -\gamma C_r^T (C_r+D_r K) + C_p^T (C_p+D_p K)  & -(A+BK)^T 
        \end{bmatrix}.
    \end{equation*}
 has an imaginary eigenvalue $i \omega$.
\end{theorem}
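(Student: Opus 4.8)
The plan is to encode the defining condition of a generalized singular value as a rank drop of a para-Hermitian transfer matrix at $s=i\omega$, and then translate that rank drop into an imaginary eigenvalue of $M_\gamma$ via a state/costate construction, mirroring the $H_\infty$ argument of \cite{bruinsma1990fast} but carrying the extra $G_r$ and $\epsilon$ terms. Using $\left(\begin{smallmatrix}G_r\\\sqrt{\epsilon}I\end{smallmatrix}\right)^H\left(\begin{smallmatrix}G_r\\\sqrt{\epsilon}I\end{smallmatrix}\right)=G_r^H G_r+\epsilon I$, the defining relation of generalized singular values says that $\gamma$ is such a value at frequency $\omega$ iff there is a nonzero $v$ with $G_p(i\omega)^H G_p(i\omega)v=\gamma\bigl(G_r(i\omega)^H G_r(i\omega)+\epsilon I\bigr)v$, i.e. iff the matrix $G_p(i\omega)^H G_p(i\omega)-\gamma\bigl(G_r(i\omega)^H G_r(i\omega)+\epsilon I\bigr)$ is singular.

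First I would introduce the frequency-domain state $\xi\triangleq(i\omega I-A)^{-1}Bv$, which is well defined exactly because $A$ has no eigenvalue at $i\omega$ (the standing assumption), so that $z_p\triangleq G_p(i\omega)v=C_p\xi+D_p v$ and $z_r\triangleq G_r(i\omega)v=C_r\xi+D_r v$. Exploiting $G_p(i\omega)^H=D_p^T+B^T(-i\omega I-A^T)^{-1}C_p^T$ (and the analogue for $G_r$), I would introduce the costate $\eta\triangleq(-i\omega I-A^T)^{-1}(C_p^T z_p-\gamma C_r^T z_r)$. Substituting $z_p,z_r$ then collapses the singularity condition into two coupled linear relations: the costate dynamics $-i\omega\eta=A^T\eta+(C_p^T C_p-\gamma C_r^T C_r)\xi+(C_p^T D_p-\gamma C_r^T D_r)v$, and the algebraic stationarity relation $\mathcal{D}v=(D_p^T C_p-\gamma D_r^T C_r)\xi+B^T\eta$, in which $\mathcal{D}$ appears precisely because $D_p^T D_p-\gamma D_r^T D_r-\gamma\epsilon I=-\mathcal{D}$.

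Next, since $\mathcal{D}$ is invertible, I would solve the stationarity relation for $v=K\xi+\mathcal{D}^{-1}B^T\eta$ and substitute it into the state equation $i\omega\xi=A\xi+Bv$ and into the costate dynamics. Using $K=\mathcal{D}^{-1}(D_p^T C_p-\gamma D_r^T C_r)$ together with $(D_p^T C_p-\gamma D_r^T C_r)^T=C_p^T D_p-\gamma C_r^T D_r$, the state block becomes $A+BK$ and the costate forcing reorganizes into $C_p^T(C_p+D_p K)-\gamma C_r^T(C_r+D_r K)$. After flipping the sign of the costate to match the claimed block pattern, these two relations are exactly $M_\gamma\left(\begin{smallmatrix}\xi\\\eta\end{smallmatrix}\right)=i\omega\left(\begin{smallmatrix}\xi\\\eta\end{smallmatrix}\right)$, which proves the forward implication; as a by-product the off-diagonal blocks $-B\mathcal{D}^{-1}B^T$ and $C_p^T C_p-\gamma C_r^T C_r+K^T\mathcal{D}K$ are symmetric, confirming that $M_\gamma$ is indeed Hamiltonian.

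For the converse I would reverse the construction: given an imaginary eigenvalue $i\omega$ of $M_\gamma$ with eigenvector $\left(\begin{smallmatrix}\xi\\\eta\end{smallmatrix}\right)$, set $v\triangleq K\xi-\mathcal{D}^{-1}B^T\eta$, recover $z_p,z_r$ from the first block row (which forces $\xi=(i\omega I-A)^{-1}Bv$ and hence $z_p=G_p(i\omega)v$, $z_r=G_r(i\omega)v$), and read off the generalized singular value relation from the second block row. The delicate point, which I expect to be the main obstacle, is the \emph{iff} bookkeeping of nonzero vectors: I must show a genuine generalized singular vector $v\neq0$ yields $\left(\begin{smallmatrix}\xi\\\eta\end{smallmatrix}\right)\neq0$, and conversely that the eigenvector yields $v\neq0$. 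Both reductions hinge on invertibility of $\mathcal{D}$ and, crucially, on the assumption that $A$ has no imaginary eigenvalue: if $v$ vanished, the eigenvector equations would first force $\xi=0$ (else $A\xi=i\omega\xi$), then $B^T\eta=0$ and $A^T\eta=-i\omega\eta$ with $\eta\neq0$, contradicting the assumption. Discharging these degenerate cases cleanly, while tracking the identity $G_p(i\omega)^H=G_p(-i\omega)^T$ and the costate sign convention, is where the care lies; the remaining steps are routine block substitution.
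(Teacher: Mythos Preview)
Your argument is correct and arrives at the same Hamiltonian matrix by the same underlying mechanism, but the presentation differs from the paper's. The paper works at the transfer-function level: it introduces the para-Hermitian function $\Gamma(i\omega)=\gamma G_r^H G_r-G_p^H G_p+\gamma\epsilon I$, builds a state-space realization of $\Gamma$ by applying standard rules for adjoint, product, and sum of realizations, then inverts that realization to obtain one for $\Gamma^{-1}$ whose state matrix is exactly $M_\gamma$ (Lemma~\ref{lem:Gamma-Hamiltonian}); the nondegeneracy step is handled by a PBH controllability/observability test showing that no imaginary eigenvalue of $M_\gamma$ is cancelled (Lemma~\ref{lem:un}). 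You instead work at the vector level, building the eigenvector of $M_\gamma$ explicitly as a state/costate pair $(\xi,\eta)$ from a singular vector $v$ and back. The two routes are dual: your case analysis ruling out $v=0$ (forcing first $\xi=0$ via $i\omega\notin\sigma(A)$, then $B^T\eta=0$ and $-i\omega\notin\sigma(A^T)$) is precisely the PBH computation the paper carries out in Lemma~\ref{lem:un}, specialized to an imaginary $\lambda$. Your approach is more elementary and self-contained, avoiding the realization-algebra machinery; the paper's approach makes the origin of $M_\gamma$ as the $A$-matrix of $\Gamma^{-1}$ more transparent and directly parallels the structure of the Bruinsma--Steinbuch proof it is extending.
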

\begin{proof} The proof is given in the Appendix.
\end{proof}

\begin{remark}
    For the important special case when $D_p =0$ and $ D_r =0$, the Hamiltonian matrix $M_{\gamma}$ simplifies to  
    \begin{equation}
        M_{\gamma} =
        \begin{bmatrix}
            A & -\frac{1}{\epsilon \gamma} B B^T \\
            -\gamma C_r^T C_r + C_p^T C_p & -A^T 
        \end{bmatrix}.
    \end{equation}
\end{remark}

For $M_{\gamma}$ to be well-defined, the inverse of $\mathcal{D}$ needs to exist. This can be ensured by suitably selecting the regularization parameter $\epsilon$ (recall that the RCOOG $\rightarrow$ COOG when $\epsilon$ is made arbitrarily small.) 
In fact, as the following lemma shows, there always exists an arbitrarily small $\epsilon$ that makes $M_{\gamma}$ well-defined, in the sense that $\mathcal{D}^{-1}$ exists. 
\begin{lemma}
    For any $\gamma > 0$, there exist an arbitrarily small $\epsilon$ such that $M_{\gamma}$ is well-defined.  
\end{lemma}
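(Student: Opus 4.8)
The plan is to exhibit $\mathcal{D}$ explicitly as a function of the scalar $\epsilon$ and argue that it is invertible for all but finitely many values. Writing $\mathcal{D}_0 \triangleq \gamma D_r^T D_r - D_p^T D_p$, which does not depend on $\epsilon$, we have $\mathcal{D} = \mathcal{D}_0 + \gamma \epsilon I$. Since $D_r^T D_r$ and $D_p^T D_p$ are symmetric, $\mathcal{D}_0$ is a fixed real symmetric $n_a \times n_a$ matrix, and the entire $\epsilon$-dependence of $\mathcal{D}$ sits in the affine term $\gamma \epsilon I$.

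First I would consider the determinant $p(\epsilon) \triangleq \det(\mathcal{D}_0 + \gamma \epsilon I)$ as a polynomial in $\epsilon$. Expanding shows that $p$ has degree $n_a$ with leading coefficient $\gamma^{n_a}$. Because $\gamma > 0$, this leading coefficient is nonzero, so $p$ is not the zero polynomial and hence has at most $n_a$ real roots. Equivalently, $\mathcal{D}$ is singular precisely when $-\gamma\epsilon$ is an eigenvalue of $\mathcal{D}_0$, and since $\mathcal{D}_0$ has at most $n_a$ eigenvalues, there are at most $n_a$ values of $\epsilon$ at which $\mathcal{D}$ fails to be invertible.

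Finally, because this set of ``bad'' values is finite, its complement is dense in $\mathbb{R}$; in particular, for every $\delta > 0$ the interval $(0,\delta)$ contains some $\epsilon$ at which $\mathcal{D}$ is invertible, so $\mathcal{D}^{-1}$ exists and $M_\gamma$ is well-defined. Since $\delta > 0$ is arbitrary, such an $\epsilon$ can be taken arbitrarily small, which is exactly the claim.

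I expect no substantive obstacle here: the only step needing care is confirming that $p(\epsilon)$ is genuinely a nonzero polynomial rather than identically zero, and this is settled at once by the explicit leading term $\gamma^{n_a}\epsilon^{n_a}$. Notably, the argument uses essentially nothing about the specific structure of $D_p$ and $D_r$ beyond the fact that $\mathcal{D}$ is affine in $\epsilon$ with an invertible (indeed scalar-multiple-of-identity) coefficient on $\epsilon$.
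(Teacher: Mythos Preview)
Your proof is correct and follows essentially the same approach as the paper: both arguments reduce to the observation that $\mathcal{D}=\mathcal{D}_0+\gamma\epsilon I$ is singular precisely when $-\gamma\epsilon$ lies in the (finite) spectrum of $\mathcal{D}_0$, so the set of bad $\epsilon$ is finite and one can always choose an arbitrarily small $\epsilon$ outside it. Your additional framing via the characteristic polynomial $p(\epsilon)=\det(\mathcal{D}_0+\gamma\epsilon I)$ with nonzero leading coefficient $\gamma^{n_a}$ is a nice way to make this explicit, but it is the same underlying argument.
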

\begin{proof}
    $M_{\gamma}$ is well-defined if the matrix $\mathcal{D} \triangleq \gamma D_r^T D_r - D^T_p D_p + \gamma \epsilon I$ is invertible. Let $\Lambda$ be the spectrum of $\gamma D_r^T D_r - D^T_p D_p$. If we select $\epsilon$ such that $-\gamma \epsilon \notin \Lambda$ we get that $0$ is not an eigenvalue of $\mathcal{D}$, which makes $\mathcal{D}$ invertible. Since $\Lambda$ is finite and countable, we can select $\epsilon$ arbitrarily small while ensuring that $-\gamma \epsilon \notin \Lambda$. (Note that if $0 \notin \Lambda$, we can select $\epsilon = 0$.)    
\end{proof}

We make the relationship between $M_{\gamma}$ and the RCOOG more explicit in the following corollary, which follows directly from Theorem~\ref{th:hamiltonian}, the frequency relationship in \eqref{eq:rcoog-freq} and the continuity of $\bar{\sigma}$.
\begin{corollary}
    \label{cor:ham-rcoog}
    Assume that $A$ is stable. Then the RCOOG $\|\Sigma\|_{y_p \overset{\epsilon}{\hookleftarrow} y_r} < \gamma$ iff $M_\gamma$ has no imaginary eigenvalue. 
\end{corollary}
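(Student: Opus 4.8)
The plan is to combine Theorem~\ref{th:hamiltonian} with the frequency characterization \eqref{eq:rcoog-freq}, running the whole argument through the scalar function $f(\omega)\triangleq\bar{\sigma}(G_p(i\omega),\left(\begin{smallmatrix} G_r(i\omega)\\ \sqrt{\epsilon} I \end{smallmatrix}\right))$. First I would record the standing facts: since $A$ is stable it has no eigenvalues on the imaginary axis, so Theorem~\ref{th:hamiltonian} applies at every $\omega\in\mathbb{R}$; moreover $G_p$ and $G_r$ then have no poles on the imaginary axis, so $f$ is well-defined and, by continuity of $\bar{\sigma}$, continuous on $\mathbb{R}$. By \eqref{eq:rcoog-freq} we have $\|\Sigma\|_{y_p \overset{\epsilon}{\hookleftarrow} y_r}=\sup_{\omega} f(\omega)$, and Theorem~\ref{th:hamiltonian} supplies the bridge: $M_\gamma$ has an imaginary eigenvalue if and only if $\gamma$ is a generalized singular value of $(G_p(i\omega),\left(\begin{smallmatrix} G_r(i\omega)\\ \sqrt{\epsilon} I \end{smallmatrix}\right))$ for some $\omega$.

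The forward implication is then immediate. If $\|\Sigma\|_{y_p \overset{\epsilon}{\hookleftarrow} y_r}<\gamma$, then $f(\omega)\le\|\Sigma\|_{y_p \overset{\epsilon}{\hookleftarrow} y_r}<\gamma$ for every $\omega$; since $f(\omega)=\bar{\sigma}$ is the largest generalized singular value at that frequency, every generalized singular value is strictly below $\gamma$, so $\gamma$ is never a generalized singular value. By the bridge, $M_\gamma$ has no imaginary eigenvalue.

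For the reverse implication I would argue by continuity. Assume $M_\gamma$ has no imaginary eigenvalue; by the bridge $\gamma$ is not a generalized singular value at any $\omega$, and since $f(\omega)$ is the largest such value this yields $f(\omega)\neq\gamma$ for all $\omega$. To upgrade this to $f(\omega)<\gamma$ I would exploit the behaviour at infinity: as $|\omega|\to\infty$ we have $(i\omega I-A)^{-1}\to 0$, hence $G_p(i\omega)\to D_p$ and $G_r(i\omega)\to D_r$, so $f(\omega)\to f_\infty\triangleq\bar{\sigma}(D_p,\left(\begin{smallmatrix} D_r\\ \sqrt{\epsilon} I \end{smallmatrix}\right))$. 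Provided $f_\infty<\gamma$, the range $f(\mathbb{R})$ is an interval avoiding $\gamma$, and since $f(\omega)<\gamma$ for all large $|\omega|$ this interval lies entirely below $\gamma$; hence $f(\omega)<\gamma$ for every $\omega$. Because $f\to f_\infty<\gamma$, the supremum is attained on a compact frequency interval where $f<\gamma$ strictly, giving $\|\Sigma\|_{y_p \overset{\epsilon}{\hookleftarrow} y_r}=\sup_{\omega}f<\gamma$.

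The main obstacle is precisely this anchoring step $f_\infty<\gamma$: without it one cannot exclude the degenerate situation $f(\omega)>\gamma$ for all $\omega$, in which $M_\gamma$ would still have no imaginary eigenvalue yet $\|\Sigma\|_{y_p \overset{\epsilon}{\hookleftarrow} y_r}\ge f_\infty>\gamma$, breaking the claimed equivalence. I would resolve this by observing that $f_\infty<\gamma$ is equivalent to $\mathcal{D}\succ 0$, i.e.\ to $\gamma$ exceeding the high-frequency (feedthrough) gain---the direct analogue of the condition $\gamma>\bar{\sigma}(D)$ underpinning the $H_\infty$ Hamiltonian tests of \cite{boyd1989bisection,bruinsma1990fast}. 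This positive-definiteness simultaneously makes $M_\gamma$ well-defined (strengthening the mere invertibility of $\mathcal{D}$ from the preceding lemma) and furnishes the anchor, so I would carry out the reverse implication in this regime, which is exactly the one in which the bisection procedure built on this corollary operates.
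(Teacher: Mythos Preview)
Your proof follows exactly the route the paper sketches---Theorem~\ref{th:hamiltonian}, the frequency characterization \eqref{eq:rcoog-freq}, and continuity of $\bar\sigma$---and the forward/reverse split you give is the natural unpacking of that one-line justification. Your observation about the anchoring condition $f_\infty<\gamma$ (equivalently $\mathcal{D}\succ 0$) is well taken: the paper does not make it explicit, but it is the standing hypothesis needed for the reverse implication, mirroring the assumption $\gamma>\bar\sigma(D)$ in \cite{boyd1989bisection,bruinsma1990fast}, and it is automatically in force in the regime where Algorithm~\ref{alg:main} is applied.
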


\subsection{An efficient algorithm for computing the RCOOG} 
\label{ssec:comp-alg}

Next, we use the ``inverse search'' over frequency enabled by Theorem~\ref{th:hamiltonian} to propose a method for computing the RCOOG, given in Algorithm \ref{alg:main}. The method extends of the method in \cite{bruinsma1990fast}, and requires computing generalized, rather than regular, singular values. 
On a high level, Algorithm~\ref{alg:main} iteratively refines a lower bound on the RCOOG by computing imaginary eigenvalues of a the Hamiltonian matrices $M_\gamma$ introduced in Theorem~\ref{th:hamiltonian}.
The motivation for each step of the algorithm is as follows. 

\begin{algorithm}
  \caption{Computing the RCOOG using Hamiltonian matrices.}
  \label{alg:main}
  \begin{algorithmic}[1]
      \Require $\Sigma_p$, $\Sigma_r$, regularization $\epsilon$, bound $\underline{\gamma}$, tolerance $\epsilon_{\gamma}$ 
      \Ensure The RCOOG $\|\Sigma\|_{y_p \overset{\epsilon}{\hookleftarrow} y_r}$
      \State $\gamma \leftarrow (1+2 \epsilon_{\gamma}) \underline{\gamma}$ \label{st:gamma-up}
    \State $\Lambda_\text{Im} \leftarrow$ get all imaginary eigenvalues of $M_{\gamma}$ \label{st:query}
    \If{$\Lambda_{\text{Im}} = \emptyset$} \label{st:upperbound}
    \State \textbf{return} $\frac{1}{2}(\underline{\gamma} + \gamma)$ \label{st:return}
    \Else
    \State $\{\omega_i\}_{i=1}^N \leftarrow $  sorted magnitudes of elements in $\Lambda_{\text{Im}}$. \label{st:extract}
    \For {$i \in \{1,\dots,N-1\}$} \label{st:for}
    \State $\bar{\omega} \leftarrow \frac{1}{2}(\omega_i + \omega_{i+1})$
    \State $\underline{\gamma} \leftarrow \max\left\{\underline{\gamma},\:\: 
        \bar{\sigma}(G_p(i \bar{\omega}),\left(\begin{smallmatrix}
                G_r(i \bar{\omega}) \\ \sqrt{\epsilon} I 
    \end{smallmatrix}\right))\right\}$ \label{st:up}
    \EndFor
    \EndIf
    \State \textbf{goto} Step 1
  \end{algorithmic}
\end{algorithm}

\textbf{(Step~\ref{st:gamma-up})} Each iteration starts with a lower bound $\underline{\gamma}$ of the RCOOG . We increase this lower bound with a relative tolerance $\epsilon_{\gamma}$ to get $\gamma$. 

\textbf{(Step~\ref{st:query})}  Frequencies for which $\bar{\sigma} = \gamma$ are determined by computing the imaginary eigenvalues of $M_{\gamma}$, motivated by Theorem~\ref{th:hamiltonian}. 

\textbf{(Step~\ref{st:upperbound}-\ref{st:return})} If $M_{\gamma}$ has no imaginary eigenvalues, Corollary~\ref{cor:ham-rcoog} implies that $\|\Sigma\|_{y_p \overset{\epsilon}{\hookleftarrow} y_r}< \gamma = (1+2\epsilon_{\gamma})\underline{\gamma}$. Moreover, we already know that $\underline{\gamma}$ is a lower bound, so we have that 
\begin{equation}
    \underline{\gamma} \leq \|\Sigma\|_{\overset{\epsilon}{y_p \hookleftarrow y_r}} < (1+2 \epsilon_r) \underline{\gamma}.
\end{equation}
The algorithm return the midpoint of this interval, $\frac{1}{2}(\underline{\gamma}+\gamma)$, which has a maximum relative error $\epsilon_\gamma$ from the true RCOOG.  

\textbf{(Step~\ref{st:extract})} If $M_{\gamma}$ has imaginary eigenvalues, we extract the corresponding frequencies for which $\bar{\sigma} = \gamma$. 

\textbf{(Step~\ref{st:for}-\ref{st:up})} For the extracted frequencies, we evaluate $\bar{\sigma}$ for the midpoints between them, and update our lower bound $\underline{\gamma}$ with the maximum of these evaluations.

Algorithm~\ref{alg:main} requires an initial lower bound $\underline{\gamma}$ on the RCOOG. We initialize this lower bound with 
\begin{equation}
    \bar{\gamma} \leftarrow \max\left\{
        \bar{\sigma}(G_p(0),\left(\begin{smallmatrix}G_r(0) \\ \sqrt{\epsilon}I \end{smallmatrix}\right)),\:\:
        \bar{\sigma}(D_p,\left(\begin{smallmatrix} D_r \\ \sqrt{\epsilon}I \end{smallmatrix}\right))
\right\}, 
\end{equation}
which corresponds to the maximum generalized singular values for frequencies $\omega = 0$ and $\omega \to \infty$.

\section{Numerical Experiments}
\label{sec:result}
Next, we investigate the efficiency of computing the RCOOG through a Hamiltonian matrix by using Algorithm~\ref{alg:main}. We do this by comparing Algorithm~\ref{alg:main} with the customary way of computing the RCOOG through an SDP of the form \eqref{eq:rcoog-sdp}. In Section \ref{ssec:random-ex}, we compare the methods on randomly generated systems. In Section \ref{ssec:ncs-ex}, we compare the methods on network control system, which allows for a comparison with the scalable way of computing OOGs proposed in \cite{nguyen2024scalable}. With the experiments\footnote{Code to reproduce all of the numerical results is available at \url{https://github.com/darnstrom/oog-compute}.}
, we aim to support the following claims: 

\begin{enumerate}
    \item Algorithm \ref{alg:main} is more \textbf{efficient} than solving an SDP. 
    \item Algorithm \ref{alg:main} is more \textbf{scalable} than solving an SDP.
    \item Algorithm \ref{alg:main} is more \textbf{reliable} than solving an SDP. 
\end{enumerate}

\subsection{Randomly generated systems}
\label{ssec:random-ex}
We generate stable system with varying state dimension $n_x$. For each system, the number of inputs and outputs are each $\frac{n_x}{5}$ (for both $\Sigma_p$ and $\Sigma_r$.) The elements of $B,C_p,D_p,C_r,$ and $D_r$ are all independently drawn from a normal distribution with mean 0 and variance~1. The matrix $A$ is generated as $A = T^{-1} \Lambda T$,  where $T$ is a similarity transform with elements randomly drawn from a normal distribution with mean 0 and variance~1, and $\Lambda$ is a diagonal matrix that contains stable poles.

For the generated systems, we compute the RCOOG with the regularization $\epsilon = 10^{-8}$ using Algorithm~\ref{alg:main}, and by solving an SDP of the form \eqref{eq:rcoog-sdp}. The SDPs are solved using the state-of-the-art conic solver MOSEK \cite{mosek}. For each state dimension $n_x$, we randomly generate 100 systems and record the execution times and RCOOGs. For MOSEK, we use the solve time as execution time (i.e., we disregard the setup time.) 
To get a ground truth of the RCOOG for each system, we grid 10000 frequencies on a logarithmic scale and compute the maximum of $\bar{\sigma}(G_p(i\omega),\left(\begin{smallmatrix} G_r(i \omega) \\ \sqrt{\epsilon} I \end{smallmatrix}\right))$ over these frequencies.

The upper part of Figure \ref{fig:roog-time} shows the average and best/worst-case execution times for computing the RCOOG. The lower part of Figure \ref{fig:roog-time} shows accuracy of the computed RCOOGS, with a result classified as correct if it is better or within 5\% of the value obtained by gridding. 

Figure~\ref{fig:roog-time} illustrates that Algorithm~\ref{alg:main} computes the RCOOG significantly faster (one to three orders of magnitude) than solving and SDP. The speedup also improves for larger systems, which shows that Algorithm~\ref{alg:main} is more scalable. For accuracy, Algorithm~\ref{alg:main} always computes the correct value of the RCOOG, while the accuracy when solving the SDP deteriorates as the system dimension grows.
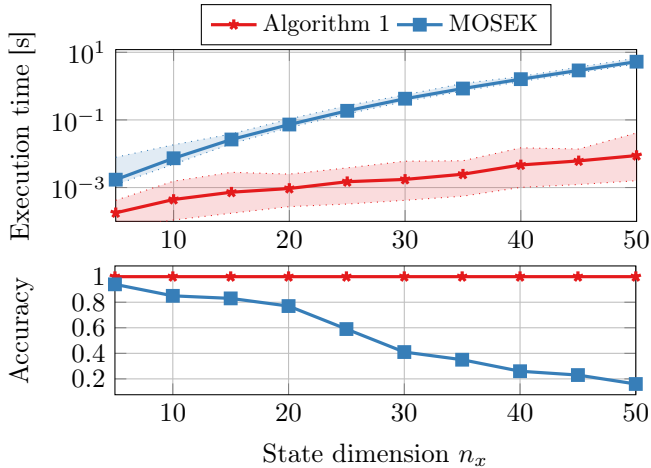
\begin{figure}
  \centering
  \begin{tikzpicture}[scale=1]
    \begin{axis}[
        xmin = 5, xmax=50,
        scale=1,
        ymode=log,
        ymin=10^-4,ymax=1.2*10^1,
        ylabel={Execution time [s]},
        legend style={at ={(0.5,1.025)},anchor=south}, ymajorgrids,yminorgrids,xmajorgrids,
        x post scale=1,
        y post scale=0.4,
        legend cell align={left},legend columns=2,
        legend style={nodes={scale=0.9, transform shape}},
        ]
        \addplot [set19c1,very thick, mark=star] table [x={nx}, y={tavg}] {\hamiltonian}; 
        \addplot [set19c2,very thick, mark=square*] table [x={nx}, y={tavg}] {\mosek}; 

        \addplot [set19c1,dotted, name path=A] table [x={nx}, y={tmax}] {\hamiltonian}; 
        \addplot [set19c1,dotted, name path=B] table [x={nx}, y={tmin}] {\hamiltonian}; 
        \addplot [set19c1,fill opacity=0.15] fill between [of=A and B];

        \addplot [set19c2,dotted, name path=Al] table [x={nx}, y={tmax}] {\mosek}; 
        \addplot [set19c2,dotted, name path=Bl] table [x={nx}, y={tmin}] {\mosek}; 
        \addplot [set19c2,fill opacity=0.15] fill between [of=Al and Bl];
        \legend{Algorithm~\ref{alg:main} , MOSEK};

    \end{axis}
\end{tikzpicture}
  \begin{tikzpicture}[scale=1]
    \begin{axis}[
        xmin = 5, xmax=50,
        scale=1,
        xlabel={State dimension $n_x$},
        ylabel={Accuracy},
        ymajorgrids,yminorgrids,xmajorgrids,
        x post scale=1,
        y post scale=0.3,
        ]
        \addplot [set19c1,very thick, mark=star] table [x={nx}, y={accuracy}] {\hamiltonian}; 
        \addplot [set19c2,very thick, mark=square*] table [x={nx}, y={accuracy}] {\mosek}; 
    \end{axis}
\end{tikzpicture}

  \caption{Time for computing the RCOOG ($\epsilon = 10^{-8}$) for systems of different dimension $n_x$. For each state dimension, 100 stable systems were randomly generated, with $\frac{n_x}{5}$ inputs and $\frac{n_x}{5}$ outputs (for both $\Sigma_r$ and $\Sigma_p$.) The solid lines in the upper figure show the average times for computing the RCOOG, and the dotted lines shows best/worst-case times. The lower figure show the fraction of correctly computed RCOOGs.}
  \label{fig:roog-time}
\end{figure}

\subsection{Positive networked control systems}
\label{ssec:ncs-ex}
Next, we consider networked control systems similar to what is considered in \cite{nguyen2024scalable}. For these systems, $A = -L$, where $L$ is an in degree Laplacian matrix for a strongly connected directed graph with $N$ nodes. Given a topology of the graph, the edge weights are randomly drawn from a uniform distribution with values between 0.8 and 1.2. In accordance with \cite{nguyen2024scalable}, we also add a self-loop with weight 1 to each node to regularize the systems such that some of the results in \cite{nguyen2024scalable} applies. The topology of the graph is generated by uniform sampling of all simple directed graphs with $N$ nodes and $N$ edges, followed by computing the connected components of this graph, and then adding a edge between a randomly selected nodes in each connected component.

The performance signal $y_p$ is the sum of all the states ($C_p = [1 \:\: \dots \:\: 1]$ and $D_p = 0$.) The residual signal $y_r$ is obtained by randomly selecting 2\% of the nodes in the graph; that is 
\begin{equation}
  C_r = \left[\begin{smallmatrix}
      e^T_{m_1} \\ \vdots \\ e^T_{m_{d}} 
\end{smallmatrix}\right], \quad D_r = 0,
\end{equation}
where $e_i$ is the $i$th  unit vector, the number of residuals are $d = N/50$ , and $\{m_1,\dots,m_d\}$ are uniformly drawn from $\{1,\dots, N\}$ without replacement.

We compare (i) Algorithm~\ref{alg:main}; (ii) solving the SDP in \eqref{eq:rcoog-sdp} with a dense $P$; (iii) solving the SDP in \eqref{eq:rcoog-sdp} with a diagonal $P$ (proposed in \cite{nguyen2024scalable}). When solving the SDPs, MOSEK \cite{mosek} performed the best when $P$ was dense, while Clarabel \cite{goulart2024clarabel} performed the best when $P$ was diagonal (since the chordal decomposition available in Clarabel can be exploited). We therefore only report the results for MOSEK when $P$ is dense, and for Clarabel when $P$ is diagonal. 
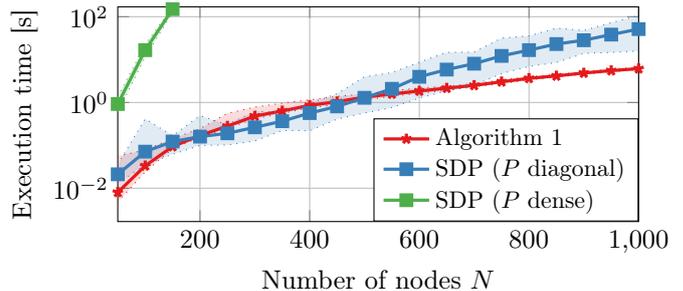
\begin{figure}
  \centering
  \begin{tikzpicture}[scale=1]
    \begin{axis}[
        xmin = 50, xmax=1000,
        scale=1,
        ymode=log,
        ymax=1.7*10^2,
        xlabel={Number of nodes $N$},
        ylabel={Execution time [s]},
        legend style={at ={(1.0,0)},anchor=south east}, ymajorgrids,yminorgrids,xmajorgrids,
        y post scale=0.5,
        legend cell align={left},legend columns=1,
        legend style={nodes={scale=0.9, transform shape}},
        ]
        \addplot [set19c1,very thick, mark=star] table [x={nx}, y={tavg}] {\ncshamiltonian}; 
        \addplot [set19c2,very thick, mark=square*] table [x={nx}, y={tavg}] {\ncsclarabel}; 
        \addplot [set19c3,very thick, mark=square*] table [x={nx}, y={tavg}] {\ncsmosek}; 

        \addplot [set19c1,dotted, name path=A] table [x={nx}, y={tmax}] {\ncshamiltonian}; 
        \addplot [set19c1,dotted, name path=B] table [x={nx}, y={tmin}] {\ncshamiltonian}; 
        \addplot [set19c1,fill opacity=0.15] fill between [of=A and B];

        \addplot [set19c2,dotted, name path=Al] table [x={nx}, y={tmax}] {\ncsclarabel}; 
        \addplot [set19c2,dotted, name path=Bl] table [x={nx}, y={tmin}] {\ncsclarabel}; 
        \addplot [set19c2,fill opacity=0.15] fill between [of=Al and Bl];

        \addplot [set19c3,dotted, name path=Am] table [x={nx}, y={tmax}] {\ncsmosek}; 
        \addplot [set19c3,dotted, name path=Bm] table [x={nx}, y={tmin}] {\ncsmosek}; 
        \addplot [set19c3,fill opacity=0.15] fill between [of=Am and Bm];
        \legend{Algorithm~\ref{alg:main} , SDP ($P$ diagonal), SDP ($P$ dense)};

    \end{axis}
\end{tikzpicture}
  \caption{Execution time for computing RCOOGs for networked control systems. SDPs when $P$ is dense is solved with MOSEK, while SDPs when $P$ is diagonal is solved with Clarabel to exploit its chordal decomposition. For a diagonal $P$, the relative error of the computed RCOOG was 7.2\% on average and over 900\% in the worst case.}
  \label{fig:ncs-time}
\end{figure}

Figure~\ref{fig:ncs-time} shows the average and best/worst-case execution times for computing the RCOOG with the regularization\footnote{A larger regularization compared with Section~\ref{ssec:random-ex} was required to avoid numerical errors when solving the SDPs. Algorithm~\ref{alg:main} worked fine with even a smaller regularization of $\epsilon = 10^{-8}$.} $\epsilon = 10^{-5}$.
In accordance with \cite{nguyen2024scalable}, we see that a diagonal $P$ scales significantly better than a dense $P$. Still, Algorithm~\ref{alg:main} scales even better than using a diagonal $P$. 

Regarding accuracy, using a diagonal $P$ does not guarantee that the correct RCOOG is computed (although, as is shown in \cite{nguyen2024scalable}, it often yields values close to the correct RCOOG.) In contrast, Algorithm~\ref{alg:main} always gives the correct value. For our experiments, the relative error between the RCOOG computed with a diagonal $P$ and with Algorithm~\ref{alg:main} was on average 7.2\%, and in the worst case over 900\%.

In our experiments, we also observe that using a diagonal $P$ works well when the graphs are relatively sparse, since then the chordal decomposition in Clarabel is very effective. But execution times quickly increase as the graphs get denser. For example, if the lower bound on the number of edges is increased from 500 to 1500 edges for $N=500$, the average execution time for solving \eqref{eq:rcoog-sdp} when $P$ is diagonal increases from 1.3 seconds to to $\approx$400 seconds. The execution time for the proposed Algorithm~\ref{alg:main} does, in contrast, remain unchanged at around 1 second, since Algorithm~\ref{alg:main} does not exploit sparsity.

\section{Conclusion}
\label{sec:conc}
We have proposed an efficient method for computing the output-to-output gain. The method is based on computing eigenvalues of Hamiltonian matrices and extends the method in \cite{bruinsma1990fast}. Numerical experiments show that the proposed method is more efficient, scalable, and reliable than methods based on solving semidefinite programs.  

While this paper only consider continuous-time systems, the approach can readily be extended to discrete-time systems, similar to how \cite{bongers1991norm} extends \cite{bruinsma1990fast}. Future work includes using the new way of efficiently compute the output-to-output in nonsmooth optimization, similar to how \cite{bruinsma1990fast} is used in \cite{apkarian2006nonsmooth}.

\appendix
\section{}

In this appendix we prove Theorem~\ref{th:hamiltonian}. The proof follows the same structure as the proof of Theorem~1 in \cite{bruinsma1990fast}, but differ in that we consider generalized singular values rather than regular singular values.

\subsection*{State-space realizations of frequency responses}
To clean up the proof, we first introduce some notation and properties that relates frequency responses and state-space realizations.
Similar to \cite{francis1987course}, we let the frequency response for a state-space system $(A,B,C,D)$ be denoted
\begin{equation}
    [A,B,C,D] \triangleq D + C(i \omega I-A)^{-1} B.
\end{equation}

The following properties can easily be shown \cite[pp.VIII]{francis1987course}: 
\begin{property}[Inverse of frequency response]
    \label{prop:inverse}
\begin{equation}
    \small
      [A,B,C,D]^{-1} = [A-C D^{-1} B, B D^{-1}, -D^{-1} C, D^{-1}] \\
\end{equation}
\end{property}
\begin{property}[Adjoint of frequency response]
    \label{prop:adjoint}
\begin{equation}
      [A,B,C,D]^H = [-A^T,-C^T, B^T, D^T] \\
\end{equation}
\end{property}

\begin{property}[Multiplication of frequency responses]
    \label{prop:mult}
\begin{equation}
    \begin{aligned}
        & [A_1,B_1,C_1,D_1] \cdot [A_2,B_2,C_2,D_2] \\
        &= \left[\begin{bmatrix}
                A_2 & 0 \\
                B_1 C_1 & A_1 
        \end{bmatrix}, \begin{bmatrix}
         B_2 \\ B_1 D_2 
        \end{bmatrix}, 
        \begin{bmatrix}
            D_1 C_2 & C_1 
        \end{bmatrix},
        D_1 D_2
    \right].
    \end{aligned}
\end{equation}
\vspace{0pt}
\end{property}
\begin{property}[Addition of frequency responses]
    \label{prop:add}
\begin{equation}
    \begin{aligned}
        & [A_1,B_1,C_1,D_1] + [A_2,B_2,C_2,D_2] \\
        &= \left[\begin{bmatrix}
                A_1 & 0 \\
                0 & A_2 
        \end{bmatrix}, \begin{bmatrix}
         B_1 \\ B_2
        \end{bmatrix}, 
        \begin{bmatrix}
            C_1 & C_2 
        \end{bmatrix},
        D_1 + D_2
    \right].
    \end{aligned}
\end{equation}
\vspace{0pt}
\end{property}

\subsection*{Hamiltonian system}

Before proving Theorem~\ref{th:hamiltonian}, we provide some partial results in the form of lemmas that we will use in the proof. In particular, these lemmas establish properties of the transfer function 
\begin{equation}
    \label{eq:Gamma}
\Gamma(i \omega) \triangleq \gamma  G_r^H(i \omega) G_r(i \omega) - G_p^H(i \omega) G_p(i \omega) + \gamma \epsilon I.
\end{equation}  
The following lemma relates $\Gamma$ and the Hamiltonian matrix $M_{\gamma}$ that is used in Theorem~\ref{th:hamiltonian}.
\begin{lemma}
    \label{lem:Gamma-Hamiltonian}
    There exists a state-space realization $(\tilde{A},\tilde{B},\tilde{C},\tilde{D})$ for $\Gamma(i \omega)^{-1}$ (defined in \eqref{eq:Gamma}) with $\tilde{A} = M_{\gamma}$. 
\end{lemma}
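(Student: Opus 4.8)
The plan is to construct a state-space realization of $\Gamma(i\omega)^{-1}$ explicitly by composing the realizations of $G_p$ and $G_r$ with Properties~\ref{prop:inverse}, \ref{prop:adjoint}, and \ref{prop:mult}, while tracking the state matrix so that it collapses to exactly $M_\gamma$. The key device for keeping the state dimension at $2n$ (instead of the $4n$ one would get by forming $\gamma G_r^H G_r$ and $G_p^H G_p$ separately and adding them) is to absorb the indefinite combination into a signature matrix. Let $J \triangleq \begin{bmatrix} -I & 0 \\ 0 & \gamma I \end{bmatrix}$ and stack the two transfer functions as $G \triangleq \begin{bmatrix} G_p \\ G_r \end{bmatrix}$, which has the realization $[A, B, \begin{bmatrix} C_p \\ C_r\end{bmatrix}, \begin{bmatrix} D_p \\ D_r \end{bmatrix}]$. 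Then $\Gamma = G^H J G + \gamma\epsilon I$, so it suffices to realize the single product $G^H(JG)$ and add the constant $\gamma\epsilon I$ to its feedthrough term. Since $G$ and $G^H$ each carry $n$ states, Property~\ref{prop:mult} produces a realization of dimension $2n$, matching the size of $M_\gamma$.

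First I would compute the adjoint via Property~\ref{prop:adjoint}, giving $G^H = [-A^T, -C^T, B^T, D^T]$ with $C = \begin{bmatrix} C_p \\ C_r \end{bmatrix}$ and $D = \begin{bmatrix} D_p \\ D_r \end{bmatrix}$, and note that $JG = [A, B, JC, JD]$. Applying the multiplication formula of Property~\ref{prop:mult} to $G^H \cdot (JG)$ then yields a realization $[\hat A, \hat B, \hat C, \hat D]$ with $\hat A = \begin{bmatrix} A & 0 \\ C_p^T C_p - \gamma C_r^T C_r & -A^T \end{bmatrix}$, $\hat B = \begin{bmatrix} B \\ C_p^T D_p - \gamma C_r^T D_r \end{bmatrix}$, $\hat C = \begin{bmatrix} \gamma D_r^T C_r - D_p^T C_p & B^T \end{bmatrix}$, and feedthrough $\hat D = \gamma D_r^T D_r - D_p^T D_p$. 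Adding $\gamma\epsilon I$ makes the feedthrough exactly $\mathcal{D}$, so $\Gamma$ has the realization $[\hat A, \hat B, \hat C, \mathcal{D}]$.

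Then I apply Property~\ref{prop:inverse} with feedthrough $\mathcal{D}$ (invertible by the well-definedness assumption on $\epsilon$), which gives the state matrix of $\Gamma^{-1}$ as $\tilde A = \hat A - \hat B \mathcal{D}^{-1} \hat C$; the claim is that $\tilde A = M_\gamma$. Using $-K = \mathcal{D}^{-1}(\gamma D_r^T C_r - D_p^T C_p)$, the upper-left block of $\tilde A$ reduces to $A + BK$ and the upper-right block to $-B\mathcal{D}^{-1}B^T$ immediately; using the symmetry of $\mathcal{D}$ together with $\mathcal{D}K = D_p^T C_p - \gamma D_r^T C_r$ (so that $C_p^T D_p - \gamma C_r^T D_r = K^T\mathcal{D}$), the lower-right block becomes $-A^T - (BK)^T = -(A+BK)^T$. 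The one block requiring care, and the main obstacle, is the lower-left block, which after substitution reads $(C_p^T C_p - \gamma C_r^T C_r) + K^T \mathcal{D} K$. I would reconcile this with the stated lower-left entry of $M_\gamma$ through the identity $K^T \mathcal{D} K = (C_p^T D_p - \gamma C_r^T D_r)K$, which follows from the same two facts; expanding $-\gamma C_r^T(C_r + D_r K) + C_p^T(C_p + D_p K)$ then matches this term by term, completing the verification that $\tilde A = M_\gamma$ and hence the lemma.
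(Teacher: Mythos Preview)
Your proof is correct and follows essentially the same route as the paper: build a $2n$-dimensional state-space realization of $\Gamma$, apply Property~\ref{prop:inverse}, and identify the resulting state matrix block by block with $M_\gamma$. The only difference is cosmetic: you obtain the realization of $\Gamma$ in a single product via the signature matrix $J$ (writing $\Gamma = G^H J G + \gamma\epsilon I$), whereas the paper forms $\gamma G_r^H G_r$ and $G_p^H G_p$ separately, adds them with Property~\ref{prop:add}, and then appeals to a state-condensation step to collapse the $4n$ states back to $2n$; your device sidesteps that condensation, but both constructions land on the identical intermediate realization $(\mathcal{A},\mathcal{B},\mathcal{C},\mathcal{D})$ and the same final computation $\tilde A=\mathcal{A}-\mathcal{B}\mathcal{D}^{-1}\mathcal{C}$.
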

\begin{proof}
    First, we use Property~\ref{prop:adjoint} and Property~\ref{prop:mult} to conclude that $\gamma G_r(i \omega)^H G_r(i \omega)$ has the state space realization   
    \begin{equation*}
        \small
        \left(
      \begin{bmatrix}
          A  & 0 \\
          -\gamma C_r^T C_r & -A^T
      \end{bmatrix},
      \begin{bmatrix}
          B \\  
          -\gamma C_r^T D_r
      \end{bmatrix},
      \begin{bmatrix}
          \gamma D_r^T C_r & B^T
      \end{bmatrix},
      \gamma D_r ^T D_r
  \right)
    \end{equation*}
    Similarly, $G_p^H G_p$  has the state-space realization
    \begin{equation*}
        \small
        \left(
      \begin{bmatrix}
          A  & 0 \\
          -C_p^T C_p & -A^T
      \end{bmatrix},
      \begin{bmatrix}
          B \\  
          -C_p^T D_p
      \end{bmatrix},
      \begin{bmatrix}
          D_p^T C_p & B^T
      \end{bmatrix},
      D_p ^T D_p
  \right).
    \end{equation*}

    Property~\ref{prop:add} and state condensation then gives that $\Gamma(i \omega) \triangleq \gamma  G_r^H(i\omega) G_r(i \omega) - G_p^H(i \omega ) G_p(i \omega) + \gamma \epsilon I$ has the state-space realization $(\mathcal{A},\mathcal{B},\mathcal{C},\mathcal{D})$ with 
\begin{equation*}
    \small
    \begin{aligned}
        \mathcal{A} &\triangleq 
    \begin{bmatrix}
        A  & 0 \\
        -\gamma C_r^T C_r + C_p^T C_p & -A^T
    \end{bmatrix},\:\:
          \mathcal{B}  \triangleq 
        \begin{bmatrix}
            B \\  
            -\gamma C_r^T D_r + C_p^T D_p
        \end{bmatrix}, \\
        \mathcal{C} &\triangleq 
        \begin{bmatrix}
            \gamma D_r^T C_r-D_p^T C_p & B^T
        \end{bmatrix},\quad
          \mathcal{D} \triangleq \gamma D_r ^T D_r -D_p^T D_p + \gamma \epsilon I.
    \end{aligned}
\end{equation*}
Finally, Property~\ref{prop:inverse} results in $\Gamma(i \omega)^{-1}$ having a state-space realization $(\tilde{A},\tilde{B},\tilde{C},\tilde{D})$ with 
\begin{equation}
    \tilde{A} \triangleq \mathcal{A} - \mathcal{B} \mathcal{D}^{-1} \mathcal{C},
\end{equation}
with the second term
\begin{equation*}
    \small 
    \begin{aligned}
        \mathcal{B} \mathcal{D}^{-1} \mathcal{C}
        & = \begin{bmatrix}
            B \\  
            -\gamma C_r^T D_r + C_p^T D_p
        \end{bmatrix} \mathcal{D}^{-1} 
        \begin{bmatrix}
            \gamma D_r^T C_r-D_p^T C_p & B^T
        \end{bmatrix}\\
        & = \begin{bmatrix}
            -B K   & B \mathcal{D}^{-1} B^T \\
            (\gamma C_r^TD_r - C_p^T D_p)K & -K^T B^T 
        \end{bmatrix},
    \end{aligned}
\end{equation*}
where we have defined $K \triangleq \mathcal{D}^{-1}(-\gamma D_r^T C_r + D_p^T C_p)$.
Substracting $\mathcal{B} \mathcal{D}^{-1} \mathcal{C}$ from $\mathcal{A}$ gives 
    $\tilde{A} = M_{\gamma}$.
\end{proof}

With the relationship between $\Gamma^{-1}$ and $M_{\gamma}$ established, we show how the poles of $\Gamma^{-1}$ relates to $M_{\gamma}$. 

\begin{lemma}
    \label{lem:un}
    Let $(\tilde{A},\tilde{B},\tilde{C},\tilde{D})$ be the state-space realization of $\Gamma(i\omega)^{-1}$, and let $\lambda$ be an eigenvalue of $M_{\gamma}$. Then if $\pm \lambda$  is not an eigenvalue of $A$, then $\lambda$ is a pole of $\Gamma^{-1}$.
\end{lemma}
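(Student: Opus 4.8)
The plan is to show that $\lambda$, an eigenvalue of $M_\gamma = \tilde A$, actually appears as a pole of the transfer function $\Gamma^{-1}$, i.e. that it is not cancelled by a zero coming from the numerator data $(\tilde B, \tilde C, \tilde D)$. Since poles of a transfer function are the eigenvalues of $\tilde A$ that remain after any pole--zero cancellations, the content of the lemma is precisely a \emph{non-cancellation} (minimality-type) claim, and the hypothesis ``$\pm\lambda$ is not an eigenvalue of $A$'' is exactly what rules out the cancellation.

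\textbf{First} I would recall from Lemma~\ref{lem:Gamma-Hamiltonian} the structure of the intermediate realization $(\mathcal{A},\mathcal{B},\mathcal{C},\mathcal{D})$ of $\Gamma$ itself, before the inversion. Its state matrix is block lower-triangular,
\begin{equation*}
\mathcal{A} = \begin{bmatrix} A & 0 \\ -\gamma C_r^T C_r + C_p^T C_p & -A^T \end{bmatrix},
\end{equation*}
so the spectrum of $\mathcal{A}$ is exactly $\sigma(A)\cup\sigma(-A^T) = \sigma(A)\cup(-\sigma(A))$, that is, the eigenvalues of $A$ together with their negatives. The eigenvalues of $M_\gamma = \tilde A = \mathcal{A}-\mathcal{B}\mathcal{D}^{-1}\mathcal{C}$ are, by contrast, the poles of $\Gamma^{-1}$, which coincide with the \emph{zeros} of $\Gamma$. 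The key structural fact I would exploit is that the zeros of $\Gamma$ and the poles of $\Gamma$ (the latter being the eigenvalues of $\mathcal{A}$) interact only through cancellation, and a zero $\lambda$ of $\Gamma$ fails to be a pole of $\Gamma^{-1}$ only if $\lambda$ is simultaneously a pole of $\Gamma$ that cancels against it.

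\textbf{Next}, the concrete argument: let $\lambda$ be an eigenvalue of $M_\gamma$, with eigenvector, and suppose for contradiction that $\lambda$ is \emph{not} a pole of $\Gamma^{-1}$. Then in the realization $(\tilde A,\tilde B,\tilde C,\tilde D)$ of $\Gamma^{-1}$ the mode $\lambda$ must be uncontrollable or unobservable, which by the PBH test means the eigenvector of $\tilde A = M_\gamma$ lies in the kernel of the relevant output/input map. I would then translate this into a statement about the realization of $\Gamma$ via Property~\ref{prop:inverse} (which links the realizations of $\Gamma$ and $\Gamma^{-1}$ through $\tilde A = \mathcal{A}-\mathcal{B}\mathcal{D}^{-1}\mathcal{C}$, $\tilde C = -\mathcal{D}^{-1}\mathcal{C}$, etc.): an unobservable/uncontrollable mode of the inverse corresponds to $\lambda$ being a pole of $\Gamma$, hence $\lambda\in\sigma(\mathcal{A})=\sigma(A)\cup(-\sigma(A))$, i.e. $\pm\lambda$ is an eigenvalue of $A$ --- contradicting the hypothesis. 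Thus $\lambda$ must genuinely be a pole of $\Gamma^{-1}$.

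\textbf{The main obstacle} I anticipate is making the PBH/cancellation step fully rigorous without assuming minimality of the realizations, since the block constructions from Properties~\ref{prop:mult}--\ref{prop:add} are generically non-minimal. The delicate point is to show that the \emph{only} way an eigenvalue of $M_\gamma$ can avoid being a pole of $\Gamma^{-1}$ is through a cancellation against a pole of $\Gamma$, and to verify that such a pole must come from $\sigma(\mathcal{A})$ rather than being spuriously introduced by the inversion; this is where the precise form of $\tilde C$ and $\tilde B$ from Property~\ref{prop:inverse} and the triangular spectrum of $\mathcal{A}$ must be used carefully. I would handle this by arguing directly at the level of the eigenvector: starting from $M_\gamma v = \lambda v$, I would show that if the corresponding mode were not a pole of $\Gamma^{-1}$, then $\Gamma(\lambda)^{-1}$ (or equivalently $\Gamma(\lambda)$) would be singular in a way forcing $\lambda$ into $\sigma(A)\cup(-\sigma(A))$, closing the contradiction.
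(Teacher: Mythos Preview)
Your overall strategy---show that any eigenvalue of $M_\gamma=\tilde A$ which fails to be a pole of $\Gamma^{-1}$ must be an uncontrollable or unobservable mode, and then force that mode into $\sigma(A)\cup(-\sigma(A))$---is exactly the paper's approach. The paper, however, carries out directly what you describe as your fallback: it writes $\tilde B=\left(\begin{smallmatrix}B\mathcal{D}^{-1}\\K^T\end{smallmatrix}\right)$ and $\tilde C=[K\ \ -\mathcal{D}^{-1}B^T]$, sets up the PBH equations $[x_1^T\ x_2^T]\,[\lambda I-M_\gamma\ \ \tilde B]=0$ in block form, and eliminates the $K$-terms algebraically to obtain $x_2^T(\lambda I+A^T)=0$ and then $x_1^T(\lambda I-A)=0$, giving $\pm\lambda\in\sigma(A)$; the unobservable case is symmetric.

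Your primary route through $\mathcal{A}$ is a valid (and arguably cleaner) alternative, but your phrasing of the key step is off. The statement ``an unobservable/uncontrollable mode of the inverse corresponds to $\lambda$ being a pole of $\Gamma$'' is not correct: such a mode is by definition \emph{not} a pole. What is true---and what you need---is that uncontrollable (resp.\ unobservable) modes are invariant under the inversion formula of Property~\ref{prop:inverse}: if $w^T\tilde B=0$ then $w^T\mathcal{B}=0$ (since $\tilde B=\mathcal{B}\mathcal{D}^{-1}$), and then $w^T\tilde A=\lambda w^T$ collapses to $w^T\mathcal{A}=\lambda w^T$; similarly on the observability side via $\tilde C=-\mathcal{D}^{-1}\mathcal{C}$. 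That immediately gives $\lambda\in\sigma(\mathcal{A})=\sigma(A)\cup(-\sigma(A))$ without any explicit use of the block structure of $M_\gamma$. Once you replace the pole-cancellation language with this invariance fact, your argument closes with less computation than the paper's; the paper's block manipulation is just this same invariance unpacked by hand.
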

\begin{proof}
    First, by continuing from the end of the proof of Lemma~\ref{lem:Gamma-Hamiltonian}, we get from Property~\ref{prop:inverse} that
    \begin{equation}
        \tilde{B} = \begin{bmatrix}
            B \mathcal{D}^{-1} \\ K^T  
        \end{bmatrix},\quad
        \tilde{C} = [K \quad -\mathcal{D}^{-1} B^T].
    \end{equation}
    An eigenvalue $\lambda$ of $\tilde{A}$ (i.e., of $M_{\gamma}$) might not be a pole of $\Gamma^{-1}$ if there exists an uncontrollable or unobservable mode associated with $\lambda$.

If there exists uncontrollable modes for $\lambda$, there exists a vector $\left[\begin{smallmatrix} x_1  \\ x_2 \end{smallmatrix}\right] \neq 0$. Such that   $\left[\begin{smallmatrix} x_1  \\ x_2 \end{smallmatrix}\right]^T [\lambda I - M_{\gamma} \tilde{B}] = 0.$ That is,
    \begin{equation*}
      \begin{bmatrix}
       x_1 \\ x_2  
      \end{bmatrix}^T
        \begin{bmatrix}
            \lambda I - (A+BK) & B \mathcal{D}^{-1} B^T & B \mathcal{D}^{-1}\\
            -M_{21}  & \lambda I + (A+BK)^T  & K^T
        \end{bmatrix} = 0,
    \end{equation*}
    where we have defined $M_{21} \triangleq -\gamma C_r^T (C_r+D_r K) + C_p^T (C_p+D_p K)$. Written out, we have the equation system

    \begin{subequations}
      \begin{align}
          x_1^T (\lambda I - (A+BK)) - x_2^T M_{21} &= 0, \label{eq:unc-1} \\
          x_1^T B \mathcal{D}^{-1} B^T + x_2^T (\lambda I + (A+BK)^T) &= 0, \label{eq:unc-2} \\
          x_1^T B \mathcal{D}^{-1}+ x_2^T K^T &= 0 \label{eq:unc-3}.
      \end{align}
    \end{subequations}
    Combining \eqref{eq:unc-2} and \eqref{eq:unc-3} gives that $x_2^T(\lambda I + A^T) = 0$, which implies that that $x_2 = 0$ or $-\lambda$ is an eigenvalue of $A$. For the case $x_2=0$, combining \eqref{eq:unc-1} and \eqref{eq:unc-3} gives that $x_1^T(\lambda I - A)$, which implies that $x_1 = 0$ or $\lambda$ is an eigenvalue of $A$. So $(\tilde{A},\tilde{B}, \tilde{C}, \tilde{D})$ only have an uncontrollable mode associated with $\lambda$ if $\lambda$ or $-\lambda$ is an eigenvalue of $A$. 

    Similarly, there is an unobservable mode associated with $\lambda$ if there exists $\left[\begin{smallmatrix} x_1  \\ x_2 \end{smallmatrix}\right] \neq 0$ such that $\left[\begin{smallmatrix} \tilde{C}  \\ \lambda I - M_{\gamma} \end{smallmatrix}\right] \left[\begin{smallmatrix} x_1  \\ x_2 \end{smallmatrix}\right] = 0.$ Similar computations as above gives that $(\tilde{A},\tilde{B}, \tilde{C}, \tilde{D})$ can only have an unobservable mode associated with $\lambda$ if $\lambda$ or $-\lambda$ is an eigenvalue of $A$. 

    In conclusion, there can only be an unobservable or uncontrollable mode associated with $\lambda$ if $\lambda$ or $-\lambda$ is an eigenvalue of $A$, which in turn means that an eigenvalue $\lambda$ of $M_{\gamma}$ is a pole of $\Gamma^{-1}$ if $\pm \lambda$ is not an eigenvalue of $A$.
\end{proof}

Next, we relate the transmission zeros of $\Gamma(i \omega)$ (which relates to the poles of $\Gamma^{-1}$) to generalized singular values. 
\begin{lemma}
    \label{lem:trans-zero}
    The transfer function $\Gamma(s)$ has a transmission zero in $i \omega$ iff $\gamma$ is a generalized singular value to the matrices 
$(G_p(i\omega),\left(\begin{smallmatrix}
    G_r(i\omega) \\ \sqrt{\epsilon} I 
\end{smallmatrix}\right))$.
\end{lemma}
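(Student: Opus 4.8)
The plan is to reduce the statement about transmission zeros of the rational matrix $\Gamma(s)$ to a pointwise rank condition on the constant matrix $\Gamma(i\omega)$, and then to read off the generalized singular value relation by inspection. Throughout I would use that $A$ has no eigenvalues on the imaginary axis (the standing assumption of Theorem~\ref{th:hamiltonian}) and that $\mathcal{D}$ is invertible, so that the whole setup is well-defined.

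First I would establish that $\Gamma(s)$ has full normal rank. Since $G_p(i\omega) \to D_p$ and $G_r(i\omega) \to D_r$ as $\omega \to \infty$, the rational matrix $\Gamma$ from \eqref{eq:Gamma} satisfies $\lim_{s \to \infty} \Gamma(s) = \gamma D_r^T D_r - D_p^T D_p + \gamma \epsilon I = \mathcal{D}$, which is invertible. Hence $\det \Gamma(s)$ is not identically zero and $\Gamma$ has full normal rank $n_a$. Because $A$ has no imaginary eigenvalues, neither $G_p$ nor $G_r$ has a pole on the imaginary axis, so $\Gamma(i\omega)$ is a finite (Hermitian) matrix for every real $\omega$ and carries no pole there. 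Consequently, $i\omega$ is a transmission zero of $\Gamma$ if and only if the rank of $\Gamma(i\omega)$ drops below its normal rank, that is, if and only if the constant matrix $\Gamma(i\omega)$ is singular.

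With this reduction in hand, the remaining step is purely algebraic. Writing $N(i\omega) \triangleq \left(\begin{smallmatrix} G_r(i\omega) \\ \sqrt{\epsilon} I \end{smallmatrix}\right)$, so that $N^H N = G_r^H G_r + \epsilon I$, the definition \eqref{eq:Gamma} rearranges to $\Gamma(i\omega) = \gamma\, N(i\omega)^H N(i\omega) - G_p(i\omega)^H G_p(i\omega)$. Thus $\Gamma(i\omega)$ is singular if and only if there exists $v \neq 0$ with $G_p^H G_p v = \gamma\, N^H N v$, which is exactly the relation identifying $\gamma$ as a generalized singular value of $(G_p(i\omega), N(i\omega))$. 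Chaining the two equivalences, in both directions, would yield the claim.

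The main obstacle is the first reduction: equating ``transmission zero of the rational matrix $\Gamma$'' with ``pointwise singularity of $\Gamma(i\omega)$''. This step rests on the two structural facts above, namely full normal rank (from $\mathcal{D}$ invertible) and absence of imaginary-axis poles (from $A$ having no imaginary eigenvalues), which together ensure that a rank drop at $i\omega$ is genuinely a transmission zero, and not an artifact of a pole or of $\Gamma$ failing to have full normal rank. Once that reduction is justified, the identification with the generalized singular value relation is immediate.
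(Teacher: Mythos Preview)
Your proposal is correct and follows essentially the same route as the paper: reduce ``$i\omega$ is a transmission zero of $\Gamma$'' to ``$\Gamma(i\omega)$ is singular'', then rewrite $\Gamma(i\omega)=\gamma N^H N - G_p^H G_p$ with $N=\left(\begin{smallmatrix}G_r\\\sqrt{\epsilon}I\end{smallmatrix}\right)$ and read off the generalized singular value condition. If anything, you are more careful than the paper, which simply asserts the first equivalence without invoking full normal rank of $\Gamma$ (via invertibility of $\mathcal{D}$) or the absence of imaginary-axis poles; your explicit justification of that reduction is a genuine improvement.
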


\begin{proof}
    If $i\omega$ is a transmission zero to $\Gamma$, we have that $\det \Gamma(i \omega) = 0$. Hence, $i\omega$ is a transmission zero to $\Gamma$ iff  
    \begin{equation*}
        \begin{aligned}
            \det \Gamma(i \omega)  &= \det \gamma G_r^H(i\omega) G_r(i \omega) - G_p^H(i\omega) G_p(i \omega) + \gamma \epsilon I \\ 
                                   &= \det 
                                   \gamma \begin{bmatrix}
                                    G_r(i\omega)  \\
                                    \sqrt\epsilon I 
                                   \end{bmatrix}^H 
                                   \begin{bmatrix}
                                    G_r(i\omega) \\
                                    \sqrt\epsilon I 
                                   \end{bmatrix} 
                                   - G_p^H(i\omega) G_p(i \omega) \\
                                   &= 0, 
        \end{aligned}
    \end{equation*}
where the last equality gives $\gamma \in \sigma
(G_p(i\omega),\left(\begin{smallmatrix}
    G_r(i\omega) \\ \sqrt{\epsilon} I 
\end{smallmatrix}\right)),$ which is the desired result. 
\end{proof}

\subsection*{Proof of Theorem~\ref{th:hamiltonian}}

\begin{proof}
    From Lemma~\ref{lem:Gamma-Hamiltonian}, poles of $\Gamma^{-1}$ is a subset of the eigenvalues of $M_{\gamma}$. Since, we assume that $A$ has no eigenvalues on the imaginary axis, it follows from Lemma~\ref{lem:un} that all purely imaginary eigenvalues of $M_{\gamma}$ are poles of $\Gamma^{-1}$. So $i\omega$ being an eigenvalue of $M_{\gamma} \Leftrightarrow i \omega$ is a pole to $\Gamma^{-1}$. Finally, since the imaginary poles of $\Gamma^{-1}$ are the transmission zeros of $\Gamma$, Lemma~\ref{lem:trans-zero} gives that $\gamma$ is a generalized singular value to the matrices $\left(G_p(i \omega), \left(\begin{smallmatrix} G_r (i \omega) \\ \sqrt{\epsilon} I \end{smallmatrix}\right)\right)$.
\end{proof}

\bibliographystyle{elsarticle-num}
\bibliography{lib.bib}

\end{document}